\DeclareMathOperator*{\bigboxplus}{\scalerel*{\boxplus}{\sum}}
\newcommand{\dollarfmap}{%
  \mathbin{{<}\mspace{-4mu}{\$}\mspace{-4mu}{>}}%
}
\newcommand{\bind}{%
  \mathbin{{>}\mspace{-5mu}{>}\mspace{-4mu}{=}}%
}
\newcommand{\kleisli}{%
  \mathbin{{>}\mspace{-4mu}{=}\mspace{-3mu}{>}}%
}
  \tikzstyle{every picture}=[
\newtheorem{example}{Example}
\newtheorem{definition}{Definition}
\newtheorem{proposition}{Proposition}
\newtheorem{theorem}{Theorem}
\begin{document}

\setlength{\abovedisplayskip}{0pt}
\setlength{\belowdisplayskip}{0pt}
\setlength{\abovedisplayshortskip}{0pt}
\setlength{\belowdisplayshortskip}{0pt}

\title{Monadic Expressions and their Derivatives}

\author{
  Samira Attou
  \institute{LITIS,\\
    Université de Rouen Normandie,\\
    Avenue de l'Université,\\
    76801 Saint-Étienne-du-Rouvray, France\\
    \email{samira.attou@univ-rouen.fr}}\\
  \and
  Ludovic Mignot
  \institute{GR\textsuperscript{2}IF,\\
    Université de Rouen Normandie,\\
    Avenue de l'Université,\\
    76801 Saint-Étienne-du-Rouvray, France \\
    \email{ludovic.mignot@univ-rouen.fr}}\\
  \and
  Clément Miklarz
  \institute{GR\textsuperscript{2}IF,\\
    Université de Rouen Normandie,\\
    Avenue de l'Université,\\
    76801 Saint-Étienne-du-Rouvray, France \\
    \email{clement.miklarz1@univ-rouen.fr}}\\
  \and
  Florent Nicart
  \institute{GR\textsuperscript{2}IF,\\
    Université de Rouen Normandie,\\
    Avenue de l'Université,\\
    76801 Saint-Étienne-du-Rouvray, France \\
    \email{florent.nicart@univ-rouen.fr}}}

\def\titlerunning{Monadic Expressions and their Derivatives}

\def\authorrunning{S.Attou et al.}

\maketitle

\begin{abstract}
  We propose another interpretation of well-known derivatives computations from regular expressions, due to Brzozowski, Antimirov or Lombardy and Sakarovitch,
  in order to abstract the underlying data structures (\emph{e.g.} sets or linear combinations) using the notion of monad.
  As an example of this generalization advantage, we introduce a new derivation technique based on the graded module monad.

  We also extend operators defining expressions to any \(n\)-ary functions over value sets, such as
  classical operations (like negation or intersection for Boolean weights) or more exotic ones (like algebraic mean for rational weights).

  Moreover, we present how to compute a (non-necessarily finite) automaton from such an extended expression, using the Colcombet and Petrisan categorical definition of automata.
  These category theory concepts allow us to perform this construction in a unified way, whatever the underlying monad.

  Finally, to illustrate our work, we present a Haskell implementation of these notions using advanced techniques of functional programming,
  and we provide a web interface to manipulate concrete examples.
\end{abstract}

% \keywords{Regular Expressions \and Derivatives \and Category Theory \and Haskell}

\section{Introduction}

Regular expressions are a classical way to represent associations between words and value sets.
As an example, classical regular expressions denote sets of words and regular expressions with multiplicities denote formal series.
From a regular expression, solving the membership test (determining whether a word belongs to the denoted language) or the weighting test (determining the weight of a word in the denoted formal series) can be solved, following Kleene theorems~\cite{Kle56,Sch61} by computing a finite automaton, such as the position automaton~\cite{Glu61,BS86,CF11,CLOZ04}.

Another family of methods to solve these tests is the family of derivative computations, that does not require the construction of a whole automaton.
The common point of these techniques is to transform the test for an arbitrary word into the test for the empty word, which can be easily solved in a purely syntactical way (\emph{i.e.} by induction over the structure of expressions).
Brzozowski~\cite{Brzo64} shows how to compute, from a regular expression \(E\) and a word \(w\), a regular expression \(d_w(E)\) denoting the set of words \(w'\) such that \(ww'\) belongs to the language denoted by \(E\).
Solving the membership test hence becomes the membership test for the empty word in the expression \(d_w(E)\).
Antimirov~\cite{Ant96} modifies this method in order to produce sets of expressions instead of expressions, \emph{i.e.} defines the partial derivatives \(\partial_w(E)\) as a set of expressions the sum of which denotes the same language as \(d_w(E)\).
If the number of derivatives is exponential w.r.t.\ the length \(|E|\) of \(E\) in the worst case\footnote{as far as rules of associativity, commutativity and idempotence of the sum are considered, possibly infinite otherwise.}, the partial derivatives produce at most a linear number of expressions w.r.t. \(|E|\).
Finally, Lombardy and Sakarovitch~\cite{LS05} extends these methods to expressions with multiplicities.

It is well-known that these methods are based on a common operation, the quotient of languages.
Furthermore, Antimirov's method can be interpreted as the derivation of regular expression with multiplicities in the Boolean semiring.
However, the Brzozowski computation does not produce the same expressions (\emph{i.e.} equality over the syntax trees) as the Antimirov one.

\textbf{Main contributions:}
In this paper, we present a unification of these computations by applying notions of category theory to the category of sets,
and show how to compute categorical automata as defined in~\cite{CP17}, by reinterpreting the work started in~\cite{LM20}.
We make use of classical monads to model well-known derivatives computations.
Furthermore, we deal with \emph{extended} expressions in a general way: in this paper, expressions can support extended operators like complement, intersection, but also any \(n\)-ary function (algebraic mean, extrema multiplications, \emph{etc}.).
The main difference with~\cite{LM20} is that we formally state the languages and series that the expressions denote in an inherent way w.r.t.\ the underlying monads.

More precisely, this paper presents:
\begin{itemize}
    \item an extension of expressions to any \(n\)-ary function over the value set,
    \item a monadic generalization of expressions,
    \item a solution for the membership/weight test for these expressions,
    \item a computation of categorical derivative automata,
    \item a new monad that fits with the extension to \(n\)-ary functions,
    \item an illustration implemented in Haskell using advanced functional programming.
\end{itemize}

\textbf{Motivation:}
The unification of derivation techniques is a goal by itself.
Moreover, the formal tools used to achieve this unification are also useful:
Monads offer both theoretical and practical advantages.
Indeed, from a theoretical point of view, these structures allow the abstraction of properties
and focus on the principal mechanisms that allow solving the membership and weight problems.
Besides, the introduction of exotic monads can also facilitate the study of finiteness of derivated terms.
From a practical point of view, monads are easy to implement (even in some other languages than Haskell)
and allow us to produce compact and safe code. Finally, we can easily combine different algebraic structures or add some technical functionalities
(capture groups, logging, nondeterminism, \emph{etc.}) thanks to notions like monad transformers~\cite{MPJ95}.

This paper is structured as follows.
In Section~\ref{sec prelim}, we gather some preliminary material, like algebraic structures or category theory notions.
We also introduce some functions well-known to the Haskell community that can allow us to reduce the size of our equations.
We then structurally define the expressions we deal with, the associated series and the weight test for the empty word in Section~\ref{sec expr def}.
In order to extend this test to any arbitrary word, we first state in Section~\ref{sec supp} some properties required by the monads we consider.
Once this so-called support is determined, we show in Section~\ref{sec deriv} how to compute the derivatives.
The computation of derivative automata is explained in Section~\ref{sec aut cons}.
A new monad and its associated derivatives computation is given in Section~\ref{sec new mon}.
Finally, our implementation is presented in Section~\ref{sec haskell}.

\section{Preliminaries}\label{sec prelim}

We denote by \(S \rightarrow S'\) the set of functions from a set \(S\) to a set \(S'\).
The notation \(\lambda x \rightarrow f(x) \) is an equivalent notation for a function \(f\).
% As an example, the function \(\lambda x \rightarrow x + 1 \) is equivalently the successor function.

A \emph{monoid} is a set \(S\) endowed with an associative operation and a unit element.
A \emph{semiring} is a structure \((S, \times, +, 1, 0)\) such that \((S, \times, 1)\) is a monoid, \((S, +, 0)\) is a commutative monoid, \(\times \) distributes over \(+\) and \(0\) is an annihilator for \(\times \).
A \emph{starred semiring} is a semiring with a unary function \({}^\star \) such that
\begin{equation*}
    k^\star = 1 + k\times k^\star = 1 + k^\star\times k.
\end{equation*}
%

% A \emph{left-action} (resp. \emph{right-action}) of a set \(S\) over a set \(S'\) is a function from \(S\times S'\) (resp. \(S'\times S\)) to \(S'\).
% Other structures can act in a more structured way.
%

A \(\mathbb{K}\)-\emph{series} over the free monoid \((\Sigma^*, \cdot, \varepsilon)\) associated with an alphabet \(\Sigma \), for a semiring \(\mathbb{K}\)  where \(\mathbb{K}=(K, \times, +, 1, 0)\), is a function from \(\Sigma^*\) to \(K\).
The set of \(\mathbb{K}\)-\emph{series} can be endowed with the structure of semiring as follows:
\begin{align*}
    1(w)                               & =
    \begin{cases}
        1 & \text{if } w = \varepsilon, \\
        0 & \text{otherwise},
    \end{cases} &
    0(w)                               & = 0,                                          \\
    (S_1 + S_2)(w)                     & = S_1(w) + S_2(w),                          &
    (S_1 \times S_2)(w)                & = \sum_{u\cdot v = w} S_1(u) \times S_2(v).
\end{align*}
Furthermore, if \(S_1(\varepsilon) = 0\) (\emph{i.e.} \(S_1\) is said to be \emph{proper}), the \emph{star} of \(S_1\) is the series defined by
\begin{align*}
    {(S_1)}^\star(\varepsilon) & = 1,                                                                                                   &
    {(S_1)}^\star(w)           & = \sum_{n \leq |w|, w = u_1 \cdots u_n, u_j \neq \varepsilon } S_1(u_1) \times \cdots \times S_1(u_n).
\end{align*}
Finally, for any function \(f\) in \(K^n \rightarrow K\), we set:
\begin{equation}
    (f(S_1, \ldots, S_n))(w) = f(S_1(w), \ldots, S_n(w)). \label{eq f series}
\end{equation}

A \emph{functor}\footnote{More precisely, a functor over a subcategory of the category of sets.} \(F\) associates
with each set \(S\) a set \(F(S)\)
and with each function \(f\) in \(S \rightarrow S'\) a function \(F(f)\) from \(F(S)\) to \(F(S')\)
% \begin{itemize}
%     \item with each set \(S\) a set \(F(S)\),
%     \item with each function \(f\) in \(S \rightarrow S'\) a function \(F(f)\) from \(F(S)\) to \(F(S')\)
% \end{itemize}
such that
\begin{align*}
    F(\mathrm{id}) & = \mathrm{id}, & F(f \circ g) & = F(f) \circ F(g),
\end{align*}
where \(\mathrm{id}\) is the identity function and \(\circ \) the classical function composition.

\noindent A \emph{monad}\footnote{More precisely, a monad over a subcategory of the category of sets.} \(M\) is a functor endowed with two (families of) functions
\begin{itemize}
    \item \(\mathtt{pure}\), from a set \(S\) to \(M(S)\),
    \item \(\mathtt{bind}\), sending any function \(f\) in \(S \rightarrow M(S')\) to \(M(S) \rightarrow M(S')\),
\end{itemize}
such that the three following conditions are satisfied:
\begin{gather*}
    \begin{aligned}
        \mathtt{bind}(f)(\mathtt{pure}(s)) & = f (s),       &
        \mathtt{bind}(\mathtt{pure})       & = \mathrm{id},
    \end{aligned}\\
    \mathtt{bind}(g)(\mathtt{bind}(f)(m)) = \mathtt{bind}(\lambda x \rightarrow \mathtt{bind}(g) (f(x)))(m).
\end{gather*}
% (\textbf{à reprendre}) These laws can be simplified considering the \emph{Kleisli composition} \(\kleisli \) associated with the monad \(M\), operations combining a function \(f\) in \(S\rightarrow M(S')\) with a function \(g\) in \(S'\rightarrow M(S'')\) to produce a function in \(S\rightarrow M(S'')\) as follows:
% \begin{equation*}
%     (f \kleisli g)(s) = \mathtt{bind}(g)(f(s)).
% \end{equation*}
% The previous monad laws hence become
% \begin{align*}
%     f \kleisli \mathtt{pure}  & = \mathtt{pure} \kleisli f = f, &
%     (f \kleisli g) \kleisli h & = f \kleisli (g \kleisli h).
% \end{align*}
% From these equations, for any monad \(M\), \((S \rightarrow M(S), \kleisli, \mathtt{pure})\) is a monoid.

\begin{example}
    The \(\mathtt{Maybe}\) monad associates:
    \begin{itemize}
        \item any set \(S\) with the set \(\mathtt{Maybe}(S) = \{\mathtt{Just}(s) \mid s \in S\} \cup \{\mathtt{Nothing}\} \), where \(\mathtt{Just}\) and \(\mathtt{Nothing}\) are two syntactic tokens allowing us to extend a set with one value;
        \item any function \(f\) with the function \(\mathtt{Maybe}(f)\) defined by
              \begin{align*}
                  \mathtt{Maybe}(f)(\mathtt{Just}(s)) & = \mathtt{Just} (f(s)), &
                  \mathtt{Maybe}(f)(\mathtt{Nothing}) & = \mathtt{Nothing}
              \end{align*}
        \item is endowed with the functions \(\mathtt{pure}\) and \(\mathtt{bind}\) defined by:
              \begin{gather*}
                  \begin{aligned}
                      \mathtt{pure}(s) & = \mathtt{Just}(s),
                  \end{aligned}
                  \qquad\qquad
                  \begin{aligned}
                      \mathtt{bind}(f)(\mathtt{Just}(s)) & = f(s),             \\
                      \mathtt{bind}(f)(\mathtt{Nothing}) & = \mathtt{Nothing}.
                  \end{aligned}
              \end{gather*}
    \end{itemize}
\end{example}

\begin{example}
    The \(\mathtt{Set}\) monad associates:
    \begin{itemize}
        \item with any set \(S\) the set \(2^S\),
        \item with any function \(f\) the function \(\mathtt{Set}(f)\) defined by \( \mathtt{Set}(f)(R) = \bigcup_{r\in R} \{f(r)\}, \)
        \item is endowed with the functions \(\mathtt{pure}\) and \(\mathtt{bind}\) defined by:
              \begin{align*}
                  \mathtt{pure}(s)    & = \{s\},                &
                  \mathtt{bind}(f)(R) & = \bigcup_{r\in R}f(r).
              \end{align*}
    \end{itemize}
\end{example}

\begin{example}
    The \(\mathtt{LinComb}(\mathbb{K})\) monad, for \(\mathbb{K}=(K, \times, +, 1, 0)\), associates:
    \begin{itemize}
        \item with any set \(S\) the set of \(\mathbb{K}\)-linear combinations of elements of \(S\), where a linear combination is a finite (formal, commutative) sum of couples (denoted by \(\boxplus \)) in \(K\times S\) where \((k, s) \boxplus (k', s) = (k + k', s)\),
        \item with any function \(f\) the function \(\mathtt{LinComb}(\mathbb{K})(f)\) defined by
              \begin{equation*}
                  \mathtt{LinComb}(\mathbb{K})(f)(R) = \bigboxplus_{(k, r)\in R} (k, f(r)),
              \end{equation*}
        \item is endowed with the functions \(\mathtt{pure}\) and \(\mathtt{bind}\) defined by:
              \begin{align*}
                  \mathtt{pure}(s)    & = (1, s),                                   &
                  \mathtt{bind}(f)(R) & = \bigboxplus_{(k, r)\in R} k \otimes f(r),
              \end{align*}
              where \(\displaystyle k \otimes R = \bigboxplus_{(k', r) \in R}(k \times k', r)\).
    \end{itemize}
\end{example}
To compact equations, we use the following operators for any monad \(M\):
\begin{align*}
    f \dollarfmap s & = M(f)(s),             &
    m \bind f       & = \mathtt{bind}(f)(m).
    % m \leftap m'    & = m \bind (\lambda x \rightarrow m'), &
    % m \rightap m'   & = m' \bind (\lambda x \rightarrow m).
\end{align*}
If \(\dollarfmap \) can be used to lift unary functions to the monadic level, \(\bind \) and \(\mathtt{pure}\) can be used to lift any \(n\)-ary function \(f\) in \(S_1\times \cdots \times S_n \rightarrow S\), defining a function \(\mathtt{lift}_n\) sending \(S_1\times \cdots \times S_n \rightarrow S\) to \(M(S_1)\times \cdots \times M(S_n) \rightarrow M(S)\) as follows:
\begin{align*}
    \mathtt{lift}_n(f)(m_1, \ldots, m_n) = &
    m_1 \bind (\lambda s_1 \rightarrow           \ldots                                                                           \\
    %    & \qquad m_2 \bind (\lambda s_2 \rightarrow   \ldots                                           \\
                                           & \qquad m_n \bind (\lambda s_n \rightarrow \mathtt{pure}(f(s_1, \ldots, s_n)))\ldots)
\end{align*}

Let us consider the set \(\mathbbm{1}=\{\top \} \) with only one element.
The images of this set by some previously defined monads can be evaluated as value sets classically used to weight words in association with classical regular expressions.
As an example, \(\mathtt{Maybe}(\mathbbm{1})\) and \(\mathtt{Set}(\mathbbm{1})\) are isomorphic to the Boolean set, and any set \(\mathtt{LinComb}(\mathbb{K})(\mathbbm{1})\) can be converted into the underlying set of \(\mathbb{K}\).
This property allows us to extend in a coherent way classical expressions to monadic expressions, where the type of the weights is therefore given by the ambient monad.

\section{Monadic Expressions}\label{sec expr def}

As seen in the previous section, elements in \(M(\mathbbm{1})\) can be evaluated as classical value sets for some particular monads.
Hence, we use these elements not only for the weights associated with words by expressions, but also for the elements that act over the denoted series.

In the following, in addition to classical operators (\(+\), \(\cdot \) and \({}^*\)), we denote:
\begin{itemize}
    \item the action of an element over a series by \(\odot \),
    \item the application of a function by itself.
\end{itemize}
\begin{definition}
    Let \(M\) be a monad.
    An \(M\)-\emph{monadic expression} \(E\) over an alphabet \( \Sigma \) is inductively defined as follows:
    \begin{align*}
        E & = a,                & E & = \varepsilon,      & E & = \emptyset,                      \\
        E & = E_1 + E_2,        & E & = E_1 \cdot E_2,    & E & = E_1^*,                          \\
        E & = \alpha \odot E_1, & E & = E_1 \odot \alpha, & E & = f\left(E_1, \ldots, E_n\right),
    \end{align*}
    where \(a\) is a symbol in \(\Sigma \), \((E_1, \ldots, E_n)\) are \(n\) \(M\)-monadic expressions over \( \Sigma \), \( \alpha \) is an element of \(M(\mathbbm{1})\) and \(f\) is a function from \({(M(\mathbbm{1}))}^n\) to \(M(\mathbbm{1})\).
\end{definition}
We denote by \(\mathrm{Exp}(\Sigma)\) the set of monadic expressions over an alphabet \( \Sigma \).

\begin{example}\label{ex:def extdist}
    As an example of functions that can be used in our extension of classical operators,
    one can define the function \(\mathtt{ExtDist}(x_1, x_2, x_3) = \max(x_1, x_2, x_3) - \min(x_1, x_2, x_3)\) from \(\mathbb{N}^3\) to \(\mathbb{N}\).
\end{example}

Similarly to classical regular expressions, monadic expressions associate a weight with any word.
Such a relation can be denoted \emph{via} a formal series.
However, before defining this notion, in order to simplify our study, we choose to only consider proper expressions.
Let us first show how to characterize them by the computation of a nullability value.
\begin{definition}
    Let \(M\) be a monad such that the structure \((M(\mathbbm{1}), +, \times, {}^\star, 1, 0)\) is a starred semiring.
    The \emph{nullability value} of an \(M\)-monadic expression \(E\) over an alphabet \( \Sigma \) is the element \(\mathtt{Null}(E)\) of \( M(\mathbbm{1}) \) inductively defined as follows:
    \begin{gather*}
        \begin{aligned}
            \mathtt{Null}(\varepsilon)      & = 1,                                            &
            \mathtt{Null}(\emptyset)        & = 0,                                              \\
            \mathtt{Null}(a)                & = 0,                                            &
            \mathtt{Null}(E_1 + E_2)        & = \mathtt{Null}(E_1) + \mathtt{Null}(E_2),        \\
            \mathtt{Null}(E_1 \cdot E_2)    & = \mathtt{Null}(E_1) \times \mathtt{Null}(E_2), &
            \mathtt{Null}(E_1^*)            & = {\mathtt{Null}(E_1)}^\star,                     \\
            \mathtt{Null}(\alpha \odot E_1) & = \alpha \times \mathtt{Null}(E_1),             &
            \mathtt{Null}(E_1 \odot \alpha) & = \mathtt{Null}(E_1) \times \alpha,
        \end{aligned}\\
        \mathtt{Null}(f(E_1, \ldots, E_n)) = f(\mathtt{Null}(E_1), \ldots, \mathtt{Null}(E_n)),
    \end{gather*}
    where \(a\) is a symbol in \(\Sigma \), \((E_1, \ldots, E_n)\) are \(n \) \(M\)-monadic expressions over \( \Sigma \), \( \alpha \) is an element of \(M(\mathbbm{1})\) and \(f\) is a function from \({(M(\mathbbm{1}))}^n\) to \(M(\mathbbm{1})\).
\end{definition}
When the considered semiring is not a starred one, we restrict the nullability value computation to expressions where a starred subexpression admits a \emph{null nullability value}.
In order to compute it, let us consider the Maybe monad, allowing us to elegantly deal with such a partial function.
\begin{definition}\label{def partial nullability}
    Let \(M\) be a monad such that the structure \((M(\mathbbm{1}), +, \times, 1, 0)\) is a semiring.
    The \emph{partial nullability value} of an \(M\)-monadic expression \(E\) over an alphabet \( \Sigma \) is the element \(\mathtt{PartNull}(E)\) of \( \mathtt{Maybe}(M(\mathbbm{1})) \) defined as follows:
    \begin{gather*}
        \begin{aligned}
            \mathtt{PartNull}(\varepsilon) & = \mathtt{Just}(1), &
            \mathtt{PartNull}(\emptyset)   & = \mathtt{Just}(0), &
            \mathtt{PartNull}(a)           & = \mathtt{Just}(0),
        \end{aligned}\\
        \begin{aligned}
            \mathtt{PartNull}(E_1 + E_2)           & = \mathtt{lift}_2(+) (\mathtt{PartNull}(E_1), \mathtt{PartNull}(E_2)),        \\
            \mathtt{PartNull}(E_1 \cdot E_2)       & = \mathtt{lift}_2(\times )(\mathtt{PartNull}(E_1), \mathtt{PartNull}(E_2)),   \\
            \mathtt{PartNull}(E_1^*)               & =
            \begin{cases}
                \mathtt{Just}(1) & \text{if } \mathtt{PartNull}(E_1) = \mathtt{Just}(0), \\
                \mathtt{Nothing} & \text{otherwise,}
            \end{cases}                                               \\
            \mathtt{PartNull}(\alpha \odot E_1)    & = (\lambda E \rightarrow \alpha \times E) \dollarfmap \mathtt{PartNull}(E_1), \\
            \mathtt{PartNull}(E_1 \odot \alpha)    & = (\lambda E \rightarrow E \times \alpha) \dollarfmap \mathtt{PartNull}(E_1), \\
            \mathtt{PartNull}(f(E_1, \ldots, E_n)) & = \mathtt{lift}_n(f)(\mathtt{PartNull}(E_1), \ldots, \mathtt{PartNull}(E_n)),
        \end{aligned}
    \end{gather*}
    where \(a\) is a symbol in \(\Sigma \), \((E_1, \ldots, E_n)\) are \(n \) \(M\)-monadic expressions over \( \Sigma \), \( \alpha \) is an element of \(M(\mathbbm{1})\) and \(f\) is a function from \({(M(\mathbbm{1}))}^n\) to \(M(\mathbbm{1})\).
\end{definition}
An expression \(E\) is \emph{proper} if its partial nullability value is not \(\mathtt{Nothing}\), therefore if it is a value \(\mathtt{Just}(v)\);
in this case, \(v\) is its nullability value, denoted by \(\mathtt{Null}(E)\) (by abuse).
\begin{definition}\label{def series}
    Let \(M\) be a monad such that the structure \((M(\mathbbm{1}), +, \times, 1, 0)\) is a semiring, and \(E\) be a \(M\)-monadic proper expression over an alphabet \( \Sigma \).
    The \emph{series} \(S(E)\) \emph{associated} with \(E\) is inductively defined as follows:
    \begin{gather*}
        \begin{aligned}
            S(\varepsilon)(w)                  & =
            \begin{cases}
                1 & \text{if } w = \varepsilon, \\
                0 & \text{otherwise},
            \end{cases} &
            S(\emptyset)(w)                    & = 0, &
            S(a)(w)                            & =
            \begin{cases}
                1 & \text{if } w = a, \\
                0 & \text{otherwise},
            \end{cases}
        \end{aligned}\\
    \end{gather*}
    \begin{gather*}
        \begin{aligned}
            S(E_1 + E_2)     & = S(E_1) + S(E_2),      &
            S(E_1 \cdot E_2) & = S(E_1) \times S(E_2), &
            S(E_1^*)         & = {(S(E_1))}^\star,
        \end{aligned}\\
        \begin{aligned}
            S(\alpha \odot E_1)(w) & = \alpha \times S(E_1)(w), &
            S(E_1 \odot \alpha)(w) & = S(E_1)(w) \times \alpha,
        \end{aligned}\\
        S(f(E_1, \ldots, E_n)) = f(S(E_1), \ldots, S(E_n)),
    \end{gather*}
    where \(a\) is a symbol in \(\Sigma \), \((E_1, \ldots, E_n)\) are n \(M\)-monadic expressions over \( \Sigma \), \( \alpha \) is an element of \(M(\mathbbm{1})\) and \(f\) is a function from \({(M(\mathbbm{1}))}^n\) to \(M(\mathbbm{1})\).
\end{definition}
%
% Notice that the series \(S(F)\) associated with a starred subexpression \(F^*\) of a proper expression is always proper whereas it is not necessary the case with proper expression.
%
From now on, %to simplify the notions, 
we consider the set \(\mathrm{Exp}(\Sigma)\) of \(M\)-monadic expressions over \(\Sigma \) to be endowed with the structure of a semiring, and two expressions denoting the same series to be equal.
\noindent The \emph{weight associated with} a word \(w\) in \(\Sigma^*\) \emph{by} \(E\) is the value \(\mathtt{weight}_w(E) = S(E)(w)\).
%
% It can be shown that
The nullability of a proper expression is the weight it associates with \(\varepsilon \),
%by a trivial induction over the structures of expression,
following Definition~\ref{def partial nullability} and Definition~\ref{def series}.
\begin{proposition}
    Let \(M\) be a monad such that the structure \((M(\mathbbm{1}), +, \times, 1, 0)\) is a semiring.
    Let \(E\) be an \(M\)-monadic proper expression over \( \Sigma \).
    Then:
    \begin{equation*}
        \mathtt{Null}(E) = \mathtt{weight}_\varepsilon(E).
    \end{equation*}
\end{proposition}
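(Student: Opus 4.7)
The plan is to prove the identity by structural induction on the monadic proper expression $E$, matching each clause in the inductive definition of $\mathtt{Null}$ (interpreted through $\mathtt{PartNull}$) against the corresponding clause in the definition of $S$ specialised to the argument $\varepsilon$. Throughout the induction, I would rely on the implicit fact that, when $E$ is proper, each relevant subexpression is itself proper: this is immediate for the $\mathtt{lift}_n$-based cases (sum, product, action, general $f$) because $\mathtt{lift}_n$ returns $\mathtt{Nothing}$ as soon as any argument is $\mathtt{Nothing}$, and it is explicit in the star clause.

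For the base cases ($E = \varepsilon$, $E = \emptyset$, $E = a$), the equality $\mathtt{Null}(E) = S(E)(\varepsilon)$ follows by direct inspection of the two definitions. For the sum, the left/right actions $\alpha \odot E_1$ and $E_1 \odot \alpha$, and the general $f(E_1, \ldots, E_n)$ case, the argument is essentially the same: the inductive step uses equation~(\ref{eq f series}) and the pointwise definition of $+$, $\times \alpha$, $\alpha \times$ in the series semiring, together with the induction hypothesis on each $E_i$. The product case $E_1 \cdot E_2$ uses that the Cauchy-like sum $\sum_{u\cdot v = \varepsilon} S(E_1)(u) \times S(E_2)(v)$ collapses to the single term $S(E_1)(\varepsilon) \times S(E_2)(\varepsilon)$, which by induction equals $\mathtt{Null}(E_1) \times \mathtt{Null}(E_2) = \mathtt{Null}(E_1 \cdot E_2)$.

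The only step that requires a little care is the star case $E = E_1^*$. Since $E$ is proper, the partial nullability definition forces $\mathtt{PartNull}(E_1) = \mathtt{Just}(0)$, so $E_1$ is proper with $\mathtt{Null}(E_1) = 0$ and, by induction, $S(E_1)(\varepsilon) = 0$. Thus $S(E_1)$ is a proper series, the star ${(S(E_1))}^\star$ is well defined and evaluates to $1$ at $\varepsilon$ by definition, while $\mathtt{Null}(E_1^*) = 1$ directly by the $\mathtt{PartNull}$ clause for star. The two sides therefore coincide without any appeal to a starred-semiring structure on $M(\mathbbm{1})$, which is exactly what we want since the proposition is stated at the semiring level.

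The main obstacle is really a bookkeeping one rather than a mathematical difficulty: one has to state and use the auxiliary lemma that properness propagates to subexpressions, so that the induction hypothesis is applicable in every inductive clause, and one has to be careful in the star case not to confuse $\mathtt{Null}(E_1)^\star$ (which would require a starred semiring) with the case-based clause in $\mathtt{PartNull}(E_1^*)$ that sidesteps this requirement for proper expressions.
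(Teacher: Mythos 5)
Your proof is correct and follows exactly the route the paper intends: the paper states the proposition as an immediate consequence of Definitions~\ref{def partial nullability} and~\ref{def series}, i.e.\ the same structural induction comparing $\mathtt{Null}$ (via $\mathtt{PartNull}$) with $S(\cdot)(\varepsilon)$ clause by clause. Your extra care about properness propagating to subexpressions and about the star case avoiding any starred-semiring assumption is exactly the bookkeeping the paper leaves implicit.
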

The previous proposition implies that the weight of the empty word can be syntactically computed (\emph{i.e.} inductively computed from a monadic expression).
Now, let us show how to extend this computation by defining the computation of derivatives for monadic expressions.

\section{Monadic Supports for Expressions}\label{sec supp}

\noindent A \(\mathbb{K}\)\emph{-left-semimodule}, for a semiring \(\mathbb{K}=(K, \times, +, 1, 0)\), is a commutative monoid \((S, \pm, \underline{0})\) endowed with a function \(\triangleright \) from \(K \times S\) to \(S\) such that:
\begin{gather*}
    \begin{aligned}
        (k \times k') \triangleright s & = k \triangleright (k' \triangleright s),     &
        (k + k') \triangleright s      & = k \triangleright s \pm k' \triangleright s,
    \end{aligned}\\
    \begin{aligned}
        k \triangleright (s \pm s') & = k \triangleright s \pm k \triangleright s',     &
        1 \triangleright s          & = s,                                              &
        0 \triangleright s          & = k \triangleright \underline{0} = \underline{0}.
    \end{aligned}\\
\end{gather*}
A \(\mathbb{K}\)\emph{-right-semimodule} can be defined symmetrically.

An \emph{operad}~\cite{LV12,May06} is a structure \((O, {(\circ_{j})}_{j\in\mathbb{N}}, \mathrm{id})\) where
% \begin{itemize}
% \item 
\(O\) is a graded set (\emph{i.e.} \(O = \bigcup_{n\in\mathbb{N}} O_n\)),
% \item 
\(\mathrm{id}\) is an element of \(O_1\),
% \item 
\(\circ_j\) is a function defined for any three integers \( (i, j, k) \)\footnote{every couple \( (i,k) \) unambiguously defines the domain and codomain of a function \( \circ_j \)} with \( 0<j\leq k \) in \( O_k\times O_{i} \rightarrow O_{k+i-1} \)
such that for any elements \( p_1 \in O_m \), \( p_2 \in O_n \), \( p_3 \in O_p \):
\begin{gather*}
    \forall 0 < j \leq m, \mathrm{id} \circ_1 p_1 = p_1 \circ_j \mathrm{id} =p_1,                                 \\
    \forall 0< j \leq m, 0< j' \leq n, p_1 \circ_j (p_2 \circ_{j'} p_3) = (p_1 \circ_j p_2) \circ_{j+{j'}-1} p_3, \\
    \forall 0 < {j'} \leq j \leq m, (p_1 \circ_j p_2) \circ_{j'} p_3 = (p_1 \circ_{j'} p_3) \circ_{j+p-1} p_2.
\end{gather*}
%   \begin{enumerate}
%     \item for any integer \( 0<j \leq m \):
%           \begin{equation*}
%               \mathrm{id} \circ_1 p_1 = p_1 \circ_j \mathrm{id} =p_1,
%           \end{equation*}
%     \item for any two integers \( 0<j\leq m \) and \( 0<j'\leq n \):
%           \begin{equation*}
%               p_1 \circ_j (p_2 \circ_{j'} p_3) = (p_1 \circ_j p_2) \circ_{j+{j'}-1} p_3,
%           \end{equation*}
%     \item for any two integers \( 0 < {j'} \leq j \leq m \):
%           \begin{equation*}
%               (p_1 \circ_j p_2) \circ_{j'} p_3 = (p_1 \circ_{j'} p_3) \circ_{j+p-1} p_2.
%           \end{equation*}
% \end{enumerate}
% \end{itemize}
Combining these compositions \( \circ_j \), one can define a composition \( \circ \) sending \( O_k \times O_{i_1}\times \cdots\times O_{i_k} \) to \( O_{i_1+\cdots+i_k} \):
for any element \( (p, q_{1}, \ldots, q_{k}) \) in \( O_k \times O^k \),
\begin{equation*}
    p \circ (q_{1},\ldots,q_{k})= (\cdots((p \circ_k q_{k})\circ_{k-1} q_{{k-1}}\cdots)\cdots)\circ_1 q_{1}.
\end{equation*}
Conversely, the composition \( \circ \) can define the compositions \( \circ_j \) using the identity element:
for any two elements \( (p,q) \) in \( O_k\times O_i \), for any integer \( 0<j\leq k \):
\begin{equation*}
    p\circ_j q = p\circ (\underbrace{\mathrm{id},\ldots,\mathrm{id}}_{j-1\text{ times}},q,\underbrace{\mathrm{id},\ldots,\mathrm{id}}_{k-j\text{ times}}).
\end{equation*}
As an example, the set of \(n\)-ary functions over a set, with the identity function as unit, forms an operad.

A \emph{module over an operad} \((O, \circ, \mathrm{id})\) is a set \(S\) endowed with a function \(\divideontimes \) from \(O_n \times S^n\) to \(S\) such that
\begin{multline*}
    f \divideontimes (f_1 \divideontimes (s_{1, 1}, \ldots, s_{1, i_1}), \ldots, f_n \divideontimes (s_{n, 1}, \ldots, s_{n, i_n}))\\
    =
    (f \circ (f_1, \ldots, f_n)) \divideontimes (s_{1, 1}, \ldots, s_{1, i_1}, \ldots, s_{n, 1}, \ldots, s_{n, i_n}).
\end{multline*}
% As an example, for any set \(S\) and any operad \(\mathbb{O} = (O, \circ, \mathrm{id})\), the set \(\bigcup_{n\in\mathbb{N}} O_n \times S^n\), endowed with the function defined by
% \begin{multline*}
%     f\divideontimes ((f_1, (s_{1, 1}, \ldots, s_{1, i_1})), \ldots, (f_n, (s_{n, 1}, \ldots, s_{n, i_n}))) \\
%     = (f \circ (f_1, \ldots, f_n), (s_{1, 1}, \ldots, s_{1, i_1}, \ldots, s_{n, 1}, \ldots, s_{n, i_n}))
% \end{multline*}
% is a module over \(\mathbb{O}\)\label{ref graded module over op}.

The extension of the computation of derivatives could be performed for any monad.
Indeed, any monad could be used to define well-typed auxiliary functions that mimic the classical computations.
However, some properties should be satisfied in order to compute weights equivalently to Definition~\ref{def series}.
Therefore, in the following we consider a restricted kind of monads.

A \emph{monadic support} is a structure \((M, +, \times, 1, 0, \pm, \underline{0}, \ltimes, \triangleright, \triangleleft,  \divideontimes)\) satisfying:
\begin{itemize}
    \item \(M\) is a monad,
    \item \(\mathbb{R} = (M(\mathbbm{1}), +, \times, 1, 0)\) is a semiring,
    \item \(\mathbb{M} = (M(\mathrm{Exp}(\Sigma)), \pm, \underline{0}) \) is a monoid,
    \item \((\mathbb{M}, \ltimes)\) is a \(\mathrm{Exp}(\Sigma)\)-right-semimodule,
    \item \((\mathbb{M}, \triangleright)\) is a \(\mathbb{R}\)-left-semimodule,
    \item \((\mathbb{M}, \triangleleft)\) is a \(\mathbb{R}\)-right-semimodule,
    \item \((M(\mathrm{Exp}(\Sigma)), \divideontimes)\) is a module for the operad of the functions over \(M(\mathbbm{1})\).
\end{itemize}
An \emph{expressive support} is a monadic support \((M, +, \times, 1, 0, \pm, \underline{0}, \ltimes, \triangleright,\triangleleft, \divideontimes)\) endowed with a function \(\mathtt{toExp}\) from \(M(\mathrm{Exp}(\Sigma))\) to \(\mathrm{Exp}(\Sigma)\) satisfying the following conditions:
\begin{align}
    \mathtt{weight}_w(\mathtt{toExp}(m))              & = m  \bind \mathtt{weight}_w \label{eq toExp weight}                             \\
    \mathtt{toExp}(m \ltimes F)                       & = \mathtt{toExp}(m) \cdot F,                                                     \\
    \mathtt{toExp}(m \pm m')                          & = \mathtt{toExp}(m) + \mathtt{toExp}(m'),                                        \\
    \mathtt{toExp}(m \triangleright x)                & = \mathtt{toExp}(m) \odot x,                                                     \\
    \mathtt{toExp}(x \triangleleft m)                 & = x \odot \mathtt{toExp}(m),                                                     \\
    \mathtt{toExp}(f\divideontimes(m_1, \ldots, m_n)) & = f(\mathtt{toExp}(m_1), \ldots, \mathtt{toExp}(m_n)). \label{eq toExp function}
\end{align}
Let us now illustrate this notion with three expressive supports that will allow us to model well-known derivatives computations.

\begin{example}[The \(\mathtt{Maybe}\) support]
    \begin{gather*}
        \begin{aligned}
            \mathtt{toExp}(\mathtt{Nothing}) & = 0, &
            \mathtt{toExp}(\mathtt{Just}(E)) & = E,
        \end{aligned}\\
        \begin{aligned}
            \mathtt{Nothing} + m                      & = m ,                  \\
            m + \mathtt{Nothing}                      & = m ,                  \\
            \mathtt{Just}(\top) + \mathtt{Just}(\top) & = \mathtt{Just}(\top),
        \end{aligned}
        \qquad \qquad
        \begin{aligned}
            \mathtt{Nothing} \times   m                      & =  \mathtt{Nothing},    \\
            m \times   \mathtt{Nothing}                      & =  \mathtt{Nothing},    \\
            \mathtt{Just}(\top) \times   \mathtt{Just}(\top) & =  \mathtt{Just}(\top),
        \end{aligned}\\
        \begin{aligned}
            \mathtt{Nothing} \pm m                 & = m,                     &
            m \pm \mathtt{Nothing}                 & = m,                     &
            \mathtt{Just}(E) \pm \mathtt{Just}(E') & = \mathtt{Just}(E + E'),
        \end{aligned}\\
        \begin{aligned}
            1             & = \mathtt{Just}(\top), &
            0             & = \mathtt{Nothing},    &
            \underline{0} & = \mathtt{Nothing},
        \end{aligned}\\
        \begin{aligned}
            m \ltimes F & = (\lambda E \rightarrow E\cdot F) \dollarfmap m,
        \end{aligned}\\
        \begin{aligned}
            m \triangleright m' & = m \bind (\lambda x \rightarrow m'), &
            m \triangleleft m'  & = m' \bind (\lambda x \rightarrow m),
        \end{aligned}\\
        \begin{aligned}
            f \divideontimes (m_1, \ldots, m_n) & = \mathtt{pure}(f(\mathtt{toExp}(m_1), \ldots, \mathtt{toExp}(m_n))).
        \end{aligned}
    \end{gather*}
\end{example}
\begin{example}[The \(\mathtt{Set}\) support]
    \begin{gather*}
        \begin{aligned}
            \mathtt{toExp}(\{E_1, \ldots, E_n\}) & = E_1 + \cdots + E_n,
        \end{aligned}\\
        \begin{aligned}
            +             & = \cup,      &
            \times        & = \cap,      &
            \pm           & = \cup,      &
            1             & = \{\top \}, &
            0             & = \emptyset, &
            \underline{0} & = \emptyset,
        \end{aligned}\\
        \begin{aligned}
            m \ltimes F & = (\lambda E \rightarrow E\cdot F) \dollarfmap m,
        \end{aligned}\\
        \begin{aligned}
            m \triangleright m' & = m \bind (\lambda x \rightarrow m'), &
            m \triangleleft m'  & = m' \bind (\lambda x \rightarrow m),
        \end{aligned}\\
        \begin{aligned}
            f \divideontimes (m_1, \ldots, m_n) & = \mathtt{pure}(f(\mathtt{toExp}(m_1), \ldots, \mathtt{toExp}(m_n))).
        \end{aligned}
    \end{gather*}
\end{example}
\begin{example}[The \(\mathtt{LinComb}(\mathbb{K})\) support]
    \begin{gather*}
        \begin{aligned}
            \mathtt{toExp}((k_1, E_1) \boxplus \cdots \boxplus (k_n, E_n)) & = k_1 \odot E_1 + \cdots + k_n \odot E_n,
        \end{aligned}\\
        \begin{aligned}
            +                           & = \boxplus,            &
            (k, \top) \times (k', \top) & = (k \times k', \top), &
            1                           & = (1, \top),           &
            0                           & = (0, \top),
        \end{aligned}\\
        \begin{aligned}
            \pm           & = \boxplus,  &
            \underline{0} & = (0, \top),
        \end{aligned}\\
        \begin{aligned}
            m \ltimes F & = (\lambda E \rightarrow E\cdot F) \dollarfmap m,
        \end{aligned}\\
        \begin{aligned}
            m \triangleright m' & = m \bind (\lambda x \rightarrow m'),             &
            m \triangleleft k   & = (\lambda E \rightarrow E\odot k) \dollarfmap m,
        \end{aligned}\\
        \begin{aligned}
            f \divideontimes (m_1, \ldots, m_n) & = \mathtt{pure}(f(\mathtt{toExp}(m_1), \ldots, \mathtt{toExp}(m_n))).
        \end{aligned}
    \end{gather*}
\end{example}
% In the \(\mathtt{LinComb}(\mathbb{K})\) support, \(\triangleleft = \rightap \) when \(\mathbb{K}\) is a commutative semiring.

\section{Monadic Derivatives}\label{sec deriv}

In the following, \((M, +, \times, 1, 0, \pm, \underline{0}, \ltimes, \triangleright,\triangleleft, \mathtt{toExp})\) is an expressive support.

\begin{definition}\label{def der symb}
    The \emph{derivative} of an \(M\)-monadic expression \(E\) over \( \Sigma \) w.r.t.\ a symbol \(a\) in \(\Sigma \) is the element \( d_a(E) \) in \( M (\mathrm{Exp}(\Sigma)) \) inductively defined as follows:
    \begin{gather*}
        \begin{aligned}
            d_a(\varepsilon) & = \underline{0}, &
            d_a(\emptyset)   & = \underline{0}, &
            d_a(b)           & =
            \begin{cases}
                \mathtt{pure}(\varepsilon) & \text{if } a = b, \\
                \underline{0}              & \text{otherwise,}
            \end{cases}
        \end{aligned}\\
        \begin{aligned}
            d_a(E_1 + E_2) & = d_a(E_1) \pm d_a(E_2),  &
            d_a(E_1^*)     & = d_a(E_1) \ltimes E_1^*,
        \end{aligned}\\
        \begin{aligned}
            d_a(E_1 \cdot E_2) & =
            d_a(E_1) \ltimes E_2
            \pm \mathtt{Null}(E_1) \triangleright d_a(E_2),
        \end{aligned}\\
        \begin{aligned}
            d_a(\alpha \odot E_1) & = \alpha \triangleright d_a(E_1), &
            d_a(E_1 \odot \alpha) & = d_a(E_1) \triangleleft \alpha,
        \end{aligned}\\
        d_a(f(E_1, \ldots, E_n)) = f \divideontimes (d_a(E_1), \ldots, d_a(E_n))
    \end{gather*}
    where \(b\) is a symbol in \(\Sigma \), \((E_1, \ldots, E_n)\) are n \(M\)-monadic expressions over \( \Sigma \), \( \alpha \) is an element of \(M(\mathbbm{1})\) and \(f\) is a function from \({(M(\mathbbm{1}))}^n\) to \(M(\mathbbm{1})\).
\end{definition}
The link between derivatives and series can be stated as follows, which is an alternative description of the classical quotient.
\begin{proposition}\label{prop der and quot}
    Let \(E\) be an \(M\)-monadic expression over an alphabet \(\Sigma \), \(a\) be a symbol in \(\Sigma \) and \(w\) be a word in \(\Sigma^*\).
    Then:
    \begin{equation*}
        \mathtt{weight}_{aw} (E) = d_a(E) \bind \mathtt{weight}_w.
    \end{equation*}
\end{proposition}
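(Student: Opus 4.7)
The plan is to proceed by structural induction on $E$, and the central observation is that equation~(\ref{eq toExp weight}) rewrites the right-hand side $d_a(E) \bind \mathtt{weight}_w$ as $\mathtt{weight}_w(\mathtt{toExp}(d_a(E)))$. Hence it suffices to check, for each syntactic constructor, that the expression $\mathtt{toExp}(d_a(E))$ satisfies the classical Brzozowski-style identity of derivatives inside the semiring of expressions. The six $\mathtt{toExp}$ equations are engineered precisely so that $\mathtt{toExp}$ commutes with $\pm$, $\ltimes$, $\triangleright$, $\triangleleft$ and $\divideontimes$, producing for instance $\mathtt{toExp}(d_a(E_1 \cdot E_2)) = \mathtt{toExp}(d_a(E_1)) \cdot E_2 + \mathtt{Null}(E_1) \odot \mathtt{toExp}(d_a(E_2))$, which is the usual derivative formula.

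For the atoms $\varepsilon$, $\emptyset$ and $b \neq a$, the derivative is $\underline{0}$, whose bind with $\mathtt{weight}_w$ evaluates to $0$ (using equation~(\ref{eq toExp weight}) together with the neutrality of $\underline{0}$), which matches $\mathtt{weight}_{aw}$ in each case. For $E = a$, the monad law $\mathtt{bind}(f)(\mathtt{pure}(s)) = f(s)$ gives $\mathtt{pure}(\varepsilon) \bind \mathtt{weight}_w = \mathtt{weight}_w(\varepsilon)$, which is $1$ exactly when $w = \varepsilon$, matching $\mathtt{weight}_{aw}(a)$.

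The inductive cases for the sum $E_1 + E_2$, the two scalar actions $\alpha \odot E_1$ and $E_1 \odot \alpha$, and the $n$-ary operator $f(E_1,\ldots,E_n)$ each amount to a short chain of equalities: expand the right-hand side via (\ref{eq toExp weight}), push $\mathtt{toExp}$ through the derivative using the relevant $\mathtt{toExp}$ equation, evaluate $\mathtt{weight}_w$ on the resulting expression using Definition~\ref{def series} (together with equation~(\ref{eq f series}) in the $f$ case), and finally apply the inductive hypothesis term by term.

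The main obstacle will be the concatenation and star cases, which require reorganising sums indexed by factorisations of $aw$. For $E = E_1 \cdot E_2$, I would split $\mathtt{weight}_{aw}(E_1 \cdot E_2) = \sum_{u \cdot v = aw} \mathtt{weight}_u(E_1) \times \mathtt{weight}_v(E_2)$ into the contribution with $u = \varepsilon$, equal to $\mathtt{Null}(E_1) \times \mathtt{weight}_{aw}(E_2)$, and the contributions with $u = a \cdot u'$, which the inductive hypothesis rewrites as $\sum_{u' \cdot v = w} (d_a(E_1) \bind \mathtt{weight}_{u'}) \times \mathtt{weight}_v(E_2)$. This must then be matched against the expansion, via Definition~\ref{def series}, of $\mathtt{weight}_w$ applied to $\mathtt{toExp}(d_a(E_1)) \cdot E_2 + \mathtt{Null}(E_1) \odot \mathtt{toExp}(d_a(E_2))$; the two sums then agree term by term. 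The star case follows the same scheme, applied to $d_a(E_1^*) = d_a(E_1) \ltimes E_1^*$ together with the explicit sum formula for $S(E_1^*)$, and an inner induction on the length of $w$ to handle the star-factorisations.
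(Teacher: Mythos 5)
Your proposal is correct and follows essentially the same route as the paper: structural induction over \(E\), using the axiom \(\mathtt{weight}_w(\mathtt{toExp}(m)) = m \bind \mathtt{weight}_w\) of the expressive support to rewrite \(d_a(E) \bind \mathtt{weight}_w\) as \(\mathtt{weight}_w(\mathtt{toExp}(d_a(E)))\), pushing \(\mathtt{toExp}\) through \(\pm\), \(\ltimes\), \(\triangleright\), \(\triangleleft\), \(\divideontimes\), and concluding with Definition~\ref{def series}, Equation~\eqref{eq f series} and the induction hypothesis — exactly the chain the paper spells out for the \(f(E_1,\ldots,E_n)\) case. The only difference is one of emphasis: the paper details only the new function-operator case and delegates the classical cases (atoms, sum, concatenation, star, actions) to the cited classical proofs, whereas you sketch those factorisation arguments explicitly; both are fine.
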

\begin{proof}
    Let us proceed by induction over the structure of \(E\).
    All the classical cases (\emph{i.e.} the function operator left aside) can be proved following the classical methods (\cite{Ant96,Brzo64,LS05}).
    Therefore, let us consider this last case.
    \begin{align*}
         & d_a(f(E_1, \ldots, E_n)) \bind \mathtt{weight}_w                                                                                                                             \\
         & \qquad = \mathtt{weight}_w (\mathtt{toExp} (d_a(f(E_1, \ldots, E_n))))                                        & (\text{Eq}~\eqref{eq toExp weight})                          \\
         & \qquad = \mathtt{weight}_w (\mathtt{toExp} (f \divideontimes (d_a(E_1), \ldots, d_a(E_n)))                    & (\text{Def}~\ref{def der symb}))                             \\
         & \qquad = \mathtt{weight}_w (f (\mathtt{toExp}(d_a(E_1)), \ldots, \mathtt{toExp}(d_a(E_n))))                   & (\text{Eq}~\eqref{eq toExp function})                        \\
         & \qquad = f (\mathtt{weight}_w(\mathtt{toExp}(d_a(E_1))), \ldots, \mathtt{weight}_w(\mathtt{toExp}(d_a(E_n)))) & (\text{Def}~\ref{def series}, \text{Eq}~\eqref{eq f series}) \\
         & \qquad = f (d_a(E_1) \bind \mathtt{weight}_w, \ldots, d_a(E_n) \bind \mathtt{weight}_w)                       & (\text{Eq}~\eqref{eq toExp weight})                          \\
         & \qquad = f (\mathtt{weight}_{aw} (E_1), \ldots, \mathtt{weight}_{aw} (E_n))                                   & (\text{Ind.\ hyp.})                                          \\
         & \qquad = \mathtt{weight}_{aw} ( f(E_1, \ldots, E_n))                                                          & (\text{Def}~\ref{def series}, \text{Eq}~\eqref{eq f series})
    \end{align*}
    \qed{}
\end{proof}
Let us define how to extend the derivative computation from symbols to words, using the monadic functions.
\begin{definition}
    The \emph{derivative} of an \(M\)-monadic expression \(E\) over \( \Sigma \) w.r.t.\ a word \(w\) in \(\Sigma^*\) is the element \( d_w(E) \) in \( M (\mathrm{Exp}(\Sigma)) \) inductively defined as follows:
    \begin{align*}
        d_\varepsilon(E) & = \mathtt{pure}(E), & d_{a\cdot v}(E) & = d_a(E) \bind d_v,
    \end{align*}
    where \(a\) is a symbol in \( \Sigma \) and \(v\) a word in \(\Sigma^*\).
\end{definition}
% Alternatively, the derivation can be viewed as a monoid morphism from \((\Sigma^*, \cdot, \varepsilon)\) to \((\mathrm{Exp}(\Sigma) \rightarrow M(\mathrm{Exp}(\Sigma)), \kleisli, \mathtt{pure})\):
% \begin{align*}
%     d_\varepsilon & = \mathtt{pure}, & d_{a\cdot v} & = d_a \kleisli d_v.
% \end{align*}

Finally, it can be easily shown, by induction over the length of the words, following Proposition~\ref{prop der and quot}, that the derivatives computation can be used to define a syntactical computation of the weight of a word associated with an expression.
\begin{theorem}
    Let \(E\) be an \(M\)-monadic expression over an alphabet \( \Sigma \) and \(w\) be a word in \(\Sigma^*\).
    Then:
    \begin{equation*}
        \mathtt{weight}_w(E) = d_w(E) \bind \mathtt{Null}.
    \end{equation*}
\end{theorem}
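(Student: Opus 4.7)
The plan is to prove the equation by induction on the length of $w$, using the second monad law (left identity) for the base case and the third monad law (associativity of \texttt{bind}) together with Proposition~\ref{prop der and quot} for the inductive step.

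For the base case $w = \varepsilon$, I would unfold $d_\varepsilon(E) = \mathtt{pure}(E)$ from the definition, so that $d_\varepsilon(E) \bind \mathtt{Null} = \mathtt{bind}(\mathtt{Null})(\mathtt{pure}(E))$, which by the monad law $\mathtt{bind}(f)(\mathtt{pure}(s)) = f(s)$ collapses to $\mathtt{Null}(E)$. The preceding proposition identifying $\mathtt{Null}(E)$ with $\mathtt{weight}_\varepsilon(E)$ then closes this case.

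For the inductive step, assume the result holds for the word $v$ (of length $n$), and consider $w = a \cdot v$. I would start from the right-hand side and compute
\begin{align*}
d_{a\cdot v}(E) \bind \mathtt{Null}
 & = (d_a(E) \bind d_v) \bind \mathtt{Null} \\
 & = d_a(E) \bind (\lambda x \rightarrow d_v(x) \bind \mathtt{Null}),
\end{align*}
where the second equality is the associativity law for \texttt{bind}. Applying the induction hypothesis pointwise (each $x$ is an expression in $\mathrm{Exp}(\Sigma)$ and $v$ is strictly shorter than $w$), the inner function simplifies to $\lambda x \rightarrow \mathtt{weight}_v(x)$, so the expression becomes $d_a(E) \bind \mathtt{weight}_v$. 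Proposition~\ref{prop der and quot} then rewrites this as $\mathtt{weight}_{a\cdot v}(E)$, which is what we want.

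The only subtlety, and the place requiring the most care, is the rewriting step where the induction hypothesis is pushed under a lambda inside \texttt{bind}; formally, this uses the fact that $\mathtt{bind}(f) = \mathtt{bind}(g)$ whenever $f$ and $g$ agree pointwise, which is immediate from the definition of \texttt{bind} but worth flagging. All other steps are routine applications of the monad laws and results already proved in the paper, so the induction goes through cleanly.
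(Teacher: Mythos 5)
Your proof is correct and takes essentially the same route the paper indicates for this theorem (which it only sketches): induction on the length of \(w\), with the base case settled by \(d_\varepsilon(E)=\mathtt{pure}(E)\), the left-identity monad law and the proposition \(\mathtt{Null}(E)=\mathtt{weight}_\varepsilon(E)\), and the inductive step by associativity of \(\bind\) followed by Proposition~\ref{prop der and quot}. The pointwise application of the induction hypothesis under the \(\bind\) (valid because the induction is on word length with the statement quantified over all expressions) is indeed the only delicate point, and you handle it correctly.
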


Notice that, restraining monadic expressions to regular ones,
\begin{itemize}
    \item the \(\mathtt{Maybe}\) support leads to the classical derivatives~\cite{Brzo64},
    \item the \(\mathtt{Set}\) support leads to the partial derivatives~\cite{Ant96},
    \item the \(\mathtt{LinComb}\) support leads to the derivatives with multiplicities~\cite{LS05}.
\end{itemize}
% In the following section, we introduce another monad to deal with function composition, based on the notion of operad module.

\begin{example}\label{ex: calc der}
    Let us consider the function \(\mathtt{ExtDist}\) defined in Example~\ref{ex:def extdist} and the \(\mathtt{LinComb}(\mathbb{N})\)-monadic expression \(E = \mathtt{ExtDist}(a^*b^* + b^*a^*, b^*a^*b^*, a^*b^*a^*)\).
    \begin{align*}
        d_a(E)                   & = \mathtt{ExtDist}(a^*b^* + a^*, a^*b^*, a^*b^*a^*+ a^*)          \\
        d_{aa}(E)                & = \mathtt{ExtDist}(a^*b^* + a^*, a^*b^*, a^*b^*a^* + 2 \odot a^*) \\
        d_{aaa}(E)               & = \mathtt{ExtDist}(a^*b^* + a^*, a^*b^*, a^*b^*a^* + 3 \odot a^*) \\
        d_{aab}(E)               & = \mathtt{ExtDist}(b^*, b^*, b^*a^*)                              \\
        \mathtt{weight}_{aaa}(E) & = d_{aaa}(E) \bind \mathtt{Null}                                  \\
                                 & = \mathtt{ExtDist}(1 + 1, 1, 1 + 3) = 4 - 1 = 3                   \\
        \mathtt{weight}_{aab}(E) & = d_{aab}(E) \bind \mathtt{Null}
        = \mathtt{ExtDist}(1,1,1) = 0
    \end{align*}
\end{example}

In the next section, we show how to compute the derivative automaton associated with an expression.
\section{Automata Construction}\label{sec aut cons}

A \emph{category} \( \mathcal{C}\) is defined by:
\begin{itemize}
  \item a class \(\mathrm{Obj}_{\mathcal{C}}\) of \emph{objects},
  \item for any two objects \(A\) and \(B\), a set \( \mathrm{Hom}_{\mathcal{C}}(A,B)\) of \emph{morphisms},
  \item for any three objects \(A\), \(B\) and \(C\), an associative \emph{composition function} \(\circ_{\mathcal{C}} \) in \( \mathrm{Hom}_{\mathcal{C}}(B,C) \longrightarrow  \mathrm{Hom}_{\mathcal{C}}(A,B) \longrightarrow \mathrm{Hom}_{\mathcal{C}}(A,C)\),
  \item for any object \(A\), an \emph{identity morphism} \(\mathrm{id}_A \) in \( \mathrm{Hom}_{\mathcal{C}}(A,A) \), such that for any morphisms \(f\) in \(\mathrm{Hom}_{\mathcal{C}}(A,B)\) and \(g\) in \( \mathrm{Hom}_{\mathcal{C}}(B,A)\), \( f \circ_{\mathcal{C}} \mathrm{id}_A = f\) and \( \mathrm{id}_A \circ_{\mathcal{C}} g = g \).
\end{itemize}

Given a category \(\mathcal{C}\), a \(\mathcal{C}\)-automaton is a tuple \((\Sigma, I, Q, F, i, \delta, f)\) where
\begin{itemize}
  \item \(\Sigma \) is a set of symbols (the alphabet),
  \item \(I\) is the initial object, in \(\mathrm{Obj}(\mathcal{C})\),
  \item \(Q\) is the state object, in \(\mathrm{Obj}(\mathcal{C})\),
  \item \(F\) is the final object, in \(\mathrm{Obj}(\mathcal{C})\),
  \item \(i\) is the initial morphism, in \(\mathrm{Hom}_{\mathcal{C}}(I,Q)\),
  \item \(\delta \) is the transition function, in \(\Sigma \longrightarrow \mathrm{Hom}_{\mathcal{C}}(Q,Q)\),
  \item \(f\) is the value morphism, in \(\mathrm{Hom}_{\mathcal{C}}(Q,F)\).
\end{itemize}

The function \(\delta \) can be extended as a monoid morphism from the free monoid \((\Sigma^*, \cdot, \varepsilon)\)
to the morphism monoid \((\mathrm{Hom}_{\mathcal{C}}(Q,Q), \circ_{\mathcal{C}}, \mathrm{id}_Q)\), leading to the following weight definition.

The weight associated by a \(\mathcal{C}\)-automaton \(A=(\Sigma, I, Q, F, i, \delta, f)\) with a word \(w\) in \(\Sigma^*\)
is the morphism \(\mathtt{weight}(w)\) in \(\mathrm{Hom}_{\mathcal{C}}(I,F)\) defined by
\begin{equation*}
  \mathtt{weight}(w) = f \circ_\mathcal{C} \delta(w) \circ_\mathcal{C} i.
\end{equation*}

If the ambient category is the category of sets, and if \(I = \mathbbm{1} \), the weight of a word is equivalently
an element of \(F\).
Consequently, a deterministic (complete) automaton is equivalently a Set-automaton with \(\mathbbm{1}\) as the initial object
and \(\mathbb{B}\) as the final object.

Given a monad \(M\), the Kleisli composition of two morphisms \(f \in \mathrm{Hom}_{\mathcal{C}}(A, B) \) and \(g \in \mathrm{Hom}_{\mathcal{C}}(B, C)\)
is the morphism \((f \kleisli g) (x) = f(x) \bind g \) in \( \mathrm{Hom}_{\mathcal{C}}(A, C) \).
This composition defines a category, called the Kleisli category \(\mathcal{K}(M)\) of \(M\), where:
\begin{itemize}
  \item the objects are the sets,
  \item the morphisms between two sets \(A\) and \(B\) are
  the functions between \(A\) and \(M(B)\),
  \item the identity is the function \( \mathtt{pure} \).
\end{itemize}

Considering these categories:
\begin{itemize}
  \item a deterministic automaton is equivalently a \(\mathcal{K}(\mathtt{Maybe})\)-automaton,
  \item a nondeterministic automaton is equivalently a \(\mathcal{K}(\mathtt{Set})\)-automaton,
  \item a weighted automaton over a semiring \(\mathbb{K}\) is equivalently a \(\mathcal{K}(\mathtt{LinComb}(\mathbb{K}))\)-automaton,
\end{itemize}
all with \(\mathbbm{1}\) as both the initial object and the final object.

Furthermore, for a given expression \(E\), if \(i = \mathtt{pure}(E)\), \(\delta(a)(E') = \mathrm{d}_a(E')\) and \(f = \mathtt{Null}\),
we can compute the well-known derivative automata using the three previously defined supports, and the accessible part of these automata are finite ones
as far as classical expressions are concerned~\cite{Brzo64,Ant96,LS05}.

More precisely, extended expressions can lead to infinite automata, as shown in the next example.
\begin{example}
  Considering the computations of Example~\ref{ex: calc der},
  it can be shown that
  \begin{equation*}
    d_{a^n}(E) = \mathtt{ExtDist}(a^*b^*+a^*, a^*b^*, a^*b^*a^*+ n \odot a^*).
  \end{equation*}
  Hence, there is not a finite number of derivated terms, that are the states in the classical derivative automaton.
  This infinite automaton is represented in Figure~\ref{ref: fig auto lin comb monad}, where the final weights of the states are represented by double edges.
  The sink states are omitted.
\end{example}

\begin{figure}[H]
  \centering
  \begin{tikzpicture}[node distance=3cm,bend angle=30,transform shape,scale=1]
    \node[state, rounded rectangle] (1)  {\(\mathtt{ExtDist}(a^*b^* + b^*a^*, b^*a^*b^*, a^*b^*a^*)\)} ;

    \node[state, rounded rectangle, below left of=1,node distance = 3cm] (2) {\( \mathtt{ExtDist}(a^*b^* + a^*,a^*b^*, a^*b^*a^*
      +a^*)  \)};

    \node[state, rounded rectangle,below left of=2,node distance = 2.5cm] (2l) {\(  \mathtt{ExtDist}(a^*b^* + a^*,a^*b^*, a^*b^*a^*
      +2 \odot a^*) \)};

    \node[state, rounded rectangle,right of=2l,node distance = 4.5cm] (2r) {\(  \mathtt{ExtDist}(b^*,b^*,b^*a^*) \)};

    \node[state, rounded rectangle,below  of=2r,node distance = 3.5cm] (2+r) {\(  \mathtt{ExtDist}(0,0,a^*)  \)};

    \node[state, rounded rectangle,  right of=2,node distance=5cm] (3) {\( \mathtt{ExtDist}(b^*+b^*a^*,b^*a^*b^*+b^*,b^*a^*) \)};

    \node[state, rounded rectangle, below right of=3,node distance=3.5cm] (3l) {\( \mathtt{ExtDist}(b^*+b^*a^*,b^*a^*b^*+2\odot b^*,b^*a^*) \)};
    \node[state, rounded rectangle, below right of=2r,node distance=1.8cm] (3r) {\( \mathtt{ExtDist}(a^*,a^*b^*,a^*) \)};

    \node[ state, rounded rectangle, below  of=3r,node distance=1.5cm] (3ll) {\(  \mathtt{ExtDist}(0,b^*,0)  \)};

    \node[state, rounded rectangle,below  of=2l,node distance = 2.5cm] (2ll) {\(  \mathtt{ExtDist}(a^*b^* + a^*,a^*b^*, a^*b^*a^*
      +n \odot a^*) \)};

    % \node[state, rounded rectangle,below right of=2ll,node distance = 3.5cm] (5) {\(  \mathtt{ExtDist}(0,0,0)\)};

    \node[state, rounded rectangle,below  of=3l,node distance = 3.5cm] (3lll) {\(  \mathtt{ExtDist}(b^*+b^*a^*,b^*a^*b^*+n \odot b^*,b^*a^*) \)};
    %transitions 0

    \draw (1) ++(0cm,1cm) node {}  edge[->] (1);
    \draw[right] (1) ++(-4cm,0cm) node {$1$}  edge[double, Implies-] (1);
    \draw[right] (2) ++(-4cm,0cm) node {$1$}  edge[double, Implies-] (2);
    % \draw[right] (2r) ++(-1cm,1cm) node {$0$}  edge[double, Implies-] (2r);
    \draw[right] (2l) ++(-1cm,1cm) node {$2$}  edge[double, Implies-] (2l);
    \draw[right] (2ll) ++(-1cm,1cm) node {$n$}  edge[double, Implies-] (2ll);
    \draw[right] (2+r) ++(-2cm,0cm) node {$1$}  edge[double, Implies-] (2+r);
    \draw[right] (3) ++(4cm,0cm) node {$1$}  edge[double, Implies-] (3);
    % \draw[right] (3r) ++(0.5cm,1cm) node {$0$}  edge[double, Implies-] (3r);
    \draw[right] (3l) ++(1cm,1cm) node {$2$}  edge[double, Implies-] (3l);
    \draw[right] (3ll) ++(-1cm,1cm) node {$1$}  edge[double, Implies-] (3ll);
    \draw[right] (3lll) ++(1cm,1cm) node {$n$}  edge[double, Implies-] (3lll);
    % \draw[right] (5) ++(-1cm,1cm) node {$0$}  edge[<-,left] (5);
    \draw[right] (2ll) ++(0cm,-1cm) node {}  edge[dotted, -] (2ll);
    \draw[right] (3lll) ++(0cm,-1cm) node {}  edge[dotted, -] (3lll);

    %transitions 1
    \path[->]
    (1) edge[->, left,] node {\( a\)} (2)
    (1) edge[->, right] node {\( b \)} (3)
    (2) edge[->, left] node {\( b\)} (2r)
    (2) edge[->, left] node {\( a \)} (2l)
    (2l) edge[->, above] node {\( b \)} (2r)
    (3) edge[->, above right,bend right=5] node {\(
        a \)} (3r)
    (3) edge[->, right,bend right=-20] node {\( b \)} (3l)
    (3l) edge[->, above, in = 0, out = 235] node {\(a \)} (3r)
    (3r) edge[->, right] node {\(b \)} (3ll)
    (2l) edge[-, left] node {} (-3.9,-4.7)

    (-3.9,-5.5)  edge[->, right,] node {$a$} (2ll)
    (2ll) edge[->, above left] node {\(b \)} (2r)
    (3l) edge[-, left] node {} (5.35,-5.5)
    (2r) edge[->, bend right=30, left] node {\( a \)} (2+r)
    (5.35,-6.5) edge[->, left] node {$b$} (3lll)

    (3lll) edge[->, left] node {\(a \)} (3r)
    % (2+r) edge[->,  bend right=10, above] node {\(b \)} (5)
    % (3ll) edge[->,  bend right=-60, above] node {\(a\)} (5)
    ;
    \draw[dashed] (-3.9,-4.7)  to (-3.9,-5.5) ;
    \draw[dashed] (5.35,-5.5)   to (5.35,-6.5) ;

    \path[->,right] (2r) edge[loop above,->,above] node {$b$} ();
    \path[->,right] (3r) edge[out=145, in=115, loop,above] node {$a$} ();
    \path[->] (3ll) edge[out=-45, in=-25, loop,below] node {$b$} (3ll);
    \path[->,right] (2+r) edge[loop  below,->,below] node {$a$} ();
    % \path[->,right] (5) edge[loop  below,->,below] node {$a,b$} ();

  \end{tikzpicture}
  \caption{The (infinite) derivative weighted automaton associated with \(E\).}%
  \label{ref: fig auto lin comb monad}
\end{figure}
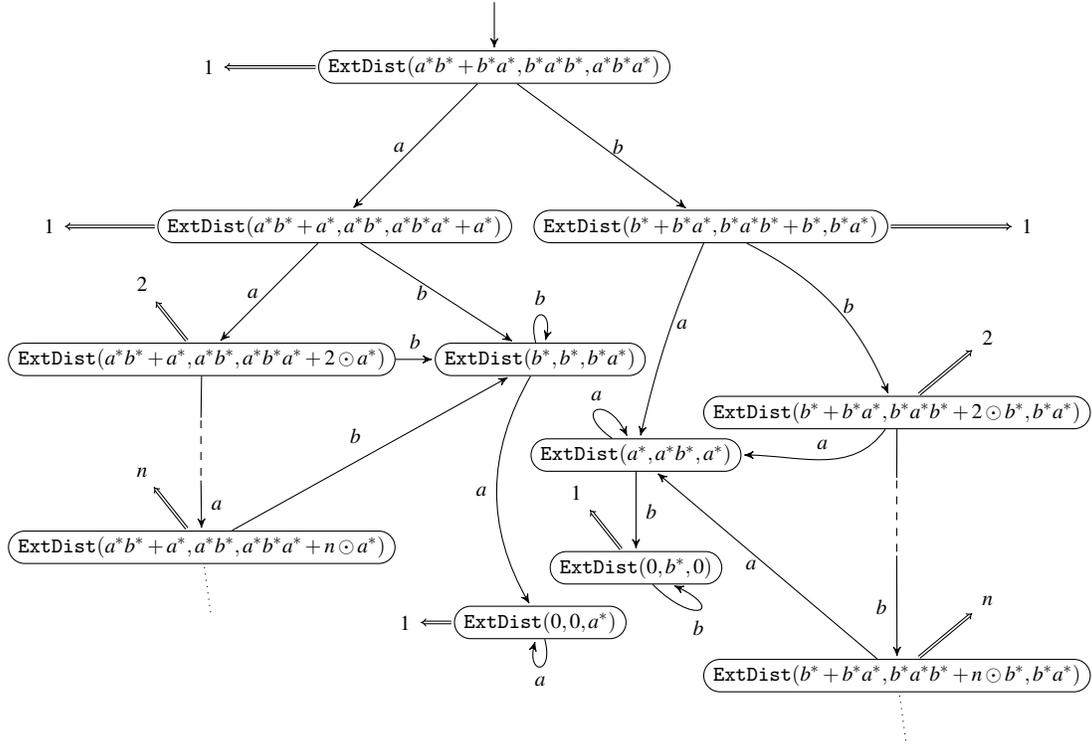

In the following section, let us show how to model a new monad in order to solve this problem.

\section{The Graded Module Monad}\label{sec new mon}

Let us consider an operad \(\mathbb{O}=(O,\circ,\mathrm{id})\) and the \emph{association} sending:
\begin{itemize}
    \item any set \(S\) to \(\bigcup_{n\in\mathbb{N}} O_n \times S^n\),
    \item any \(f\) in \(S \rightarrow S'\) to the function \(g\) in \( \bigcup_{n\in\mathbb{N}} O_n \times S^n \rightarrow \bigcup_{n\in\mathbb{N}} O_n \times S'^n\):
          \begin{equation*}
              g(o, (s_1, \ldots, s_n)) = (o, (f(s_1), \ldots, f(s_n)))
          \end{equation*}
\end{itemize}
It can be checked that this is a functor, denoted  by \(\mathtt{GradMod}(\mathbb{O})\).
Moreover, it forms a monad considering the two following functions:
\begin{align*}
    \mathtt{pure}(s)                & = (\mathrm{id}, s),                                                                                \\
    (o, (s_1, \ldots, s_n)) \bind f & = (o \circ (o_1, \ldots, o_n), (s_{1,1}, \ldots, s_{1, i_1}, \ldots, s_{n,1}, \ldots, s_{n, i_n}))
\end{align*}
where \(f(s_j) = (o_j, s_{j, 1}, \ldots, s_{j, i_j})\).
However, notice that \(\mathtt{GradMod}(\mathbb{O})(\mathbbm{1})\) cannot be easily evaluated as a value space.
Thus, let us compose it with another monad.
As an example, let us consider a semiring \(\mathbb{K}=(K,\times,+,1,0)\) and the operad \(\mathbb{O}\) of the \(n\)-ary functions over \(K\).
Hence, let us define the functor\footnote{it is folk knowledge that the composition of two functors is a functor.} \(\mathtt{GradComb}(\mathbb{O},\mathbb{K})\) that sends \(S\) to \(\mathtt{GradMod}(\mathbb{O})(\mathtt{LinComb}(\mathbb{K})(S))\).

To show that this combination is a monad, let us first define a function \(\alpha \) sending \(\mathtt{GradComb}(\mathbb{O},\mathbb{K})(S)\) to \(\mathtt{GradMod}(\mathbb{O})(S)\).
It can be easily done by converting a linear combination into an \emph{operadic combination}, \emph{i.e.} an element in \(\mathtt{GradMod}(\mathbb{O})(S)\), with the following function \(\mathtt{toOp}\):
\begin{gather*}
    \begin{multlined}
        \mathtt{toOp}((k_1, s_1) \boxplus \cdots \boxplus (k_n, s_n)) \\
        \qquad \qquad =
        (\lambda (x_1, \ldots, x_n) \rightarrow k_1 \times x_1 + \cdots + k_n \times x_n, (s_1, \ldots, s_n)),
    \end{multlined}\\
    \alpha(o, (\mathcal{L}_1, \ldots, \mathcal{L}_n))
    =
    (o\circ (o_1, \ldots, o_n), (s_{1,1}, \ldots, s_{1, i_1}, \ldots, s_{n,1}, \ldots, s_{n, i_n}))
\end{gather*}
where \(\mathtt{toOp}(\mathcal{L}_j) = (o_j, (s_{j,1}, \ldots, s_{j, i_j}))\).

Consequently, we can define the monadic functions as follows:
\begin{align*}
    \mathtt{pure}(s)                                    & = (\mathrm{id}, (1, s)),                                    \\
    (o, (\mathcal{L}_1, \ldots, \mathcal{L}_n)) \bind f & = \alpha(o, (\mathcal{L}_1, \ldots, \mathcal{L}_n)) \bind f
\end{align*}
where the second occurrence of \(\bind \) is the monadic function associated with the monad \(\mathtt{GradMod}(\mathbb{O})\).

Let us finally define an expressive support for this monad:
\begin{gather*}
    \begin{aligned}
        \mathtt{toExp}(o, (\mathcal{L}_1, \ldots, \mathcal{L}_n)) & = o(\mathtt{toExp}(\mathcal{L}_1), \ldots, \mathtt{toExp}(\mathcal{L}_n)),
    \end{aligned}\\
    \begin{aligned}
        (o, (\mathcal{L}_1, \ldots, \mathcal{L}_n)) + (o', (\mathcal{L}'_1, \ldots, \mathcal{L}'_{n'}))
         & = (o + o', (\mathcal{L}_1, \ldots, \mathcal{L}_n,\mathcal{L}'_1, \ldots, \mathcal{L}'_{n'}))      \\
        (o, (\mathcal{L}_1, \ldots, \mathcal{L}_n)) \times (o', (\mathcal{L}'_1, \ldots, \mathcal{L}'_{n'}))
         & = (o \times o', (\mathcal{L}_1, \ldots, \mathcal{L}_n,\mathcal{L}'_1, \ldots, \mathcal{L}'_{n'}))
    \end{aligned}\\
    \begin{aligned}
        \pm           & = +,                        &
        1             & = (\mathrm{id}, (1, \top)), &
        0             & = (\mathrm{id}, (0, \top)), &
        \underline{0} & = (\mathrm{id}, (0, \top)),
    \end{aligned}\\
    \begin{aligned}
        m \ltimes F & = \mathtt{pure}(\mathtt{toExp}(m) \cdot F),
    \end{aligned}\\
    \begin{aligned}
        (o, (\mathcal{M}_1, \ldots, \mathcal{M}_k)) \triangleright (o', (\mathcal{L}_1, \ldots, \mathcal{L}_{n})) & =
        (o(\mathcal{M}_1, \ldots, \mathcal{M}_k) \times o', (\mathcal{L}_1, \ldots, \mathcal{L}_{n})),                                                                                                                \\
        (o, (\mathcal{L}_1, \ldots, \mathcal{L}_n)) \triangleleft (o', (\mathcal{M}_1, \ldots, \mathcal{M}_{k}))  & = (o \times o'(\mathcal{M}_1, \ldots, \mathcal{M}_{k}), (\mathcal{L}_1, \ldots, \mathcal{L}_{n}))
    \end{aligned}\\
    \begin{multlined}
        f \divideontimes ((o_1, (\mathcal{L}_{1, 1}, \ldots, \mathcal{L}_{1, i_1})), \ldots, (o_n, (\mathcal{L}_{n, 1}, \ldots, \mathcal{L}_{n, i_n}))) \\
        \qquad \qquad = (f\circ(o_1, \ldots, o_n), (\mathcal{L}_{1, 1}, \ldots, \mathcal{L}_{1, i_1}, \ldots, \mathcal{L}_{n, 1}, \ldots, \mathcal{L}_{n, i_n}))
    \end{multlined}\\
    \begin{aligned}
        \text{where }(o + o')(x_1, \ldots, x_{n + n'}) & = o(x_1, \ldots, x_n) + o'(x_{n + 1}, \ldots, x_{n + n'})      \\
        (o \times o')(x_1, \ldots, x_{n + n'})         & = o(x_1, \ldots, x_n) \times o'(x_{n + 1}, \ldots, x_{n + n'})
    \end{aligned}
\end{gather*}
\begin{example}\label{ex finite deriv}
    Let us consider that two elements in \(\mathtt{GradComb}(\mathbb{O},\mathbb{K})(\mathrm{Exp}(\Sigma))\) are equal if they have the same image by \(\mathtt{toExp}\).
    Let us consider the expression \(E = \mathtt{ExtDist}(a^*b^* + b^*a^*, b^*a^*b^*, a^*b^*a^*)\) of Example~\ref{ex: calc der}.
    \begin{align*}
        d_a(E)                   & =    \mathtt{ExtDist} \divideontimes ((+, (a^*b^*, a^*)), (\mathrm{id}, a^*b^*), (+, (a^*b^*a^*, a^*)))           \\
                                 & = (\mathtt{ExtDist}\circ (+, \mathrm{id}, +), (a^*b^*, a^*, a^*b^*, a^*b^*a^*, a^*))                              \\
        d_{aa}(E)                & = (\mathtt{ExtDist}\circ (+, \mathrm{id}, +\circ(+, \mathrm{id})), (a^*b^*, a^*, a^*b^*, a^*b^*a^*, a^*, a^*))    \\
                                 & = (\mathtt{ExtDist}\circ (+, \mathrm{id}, + \circ(\mathrm{id}, 2 \times)), (a^*b^*, a^*, a^*b^*, a^*b^*a^*, a^*)) \\
        d_{aaa}(E)               & = (\mathtt{ExtDist}\circ (+, \mathrm{id}, +\circ(\mathrm{id}, 3 \times)), (a^*b^*, a^*, a^*b^*, a^*b^*a^*, a^*))  \\
        d_{aab}(E)               & = (\mathtt{ExtDist}\circ (+, \mathrm{id}, +), (b^*, \emptyset, b^*, b^*a^*, \emptyset))                           \\
                                 & = (\mathtt{ExtDist}, (b^*, b^*, b^*a^*))                                                                          \\
        \mathtt{weight}_{aaa}(E) & = d_{aaa}(E) \bind \mathtt{Null}                                                                                  \\
                                 & = \mathtt{ExtDist}\circ (+, \mathrm{id}, +) (1, 1, 1, 1, 3)                                                       \\
                                 & = \mathtt{ExtDist}(1 + 1, 1, 1 + 3) = 4 - 1 = 3                                                                   \\
        \mathtt{weight}_{aab}(E) & = d_{aab}(E) \bind \mathtt{Null}
        = \mathtt{ExtDist}(1,1,1) = 0
    \end{align*}
    Using this monad, the number of derivated terms, that is the number of states in the associated derivative automaton,
    is finite.
    Indeed, the computations are absorbed in the transition structure.
    This automaton is represented in Figure~\ref{ref: fig auto grad comb monad}.
    Notice that the dashed rectangle represent the functions that are composed during the traversal associated with a word.
    The final weights are represented by double edges.
    The sink states are omitted.
    The state \(b^*\) is duplicated to simplify the representation.
\end{example}

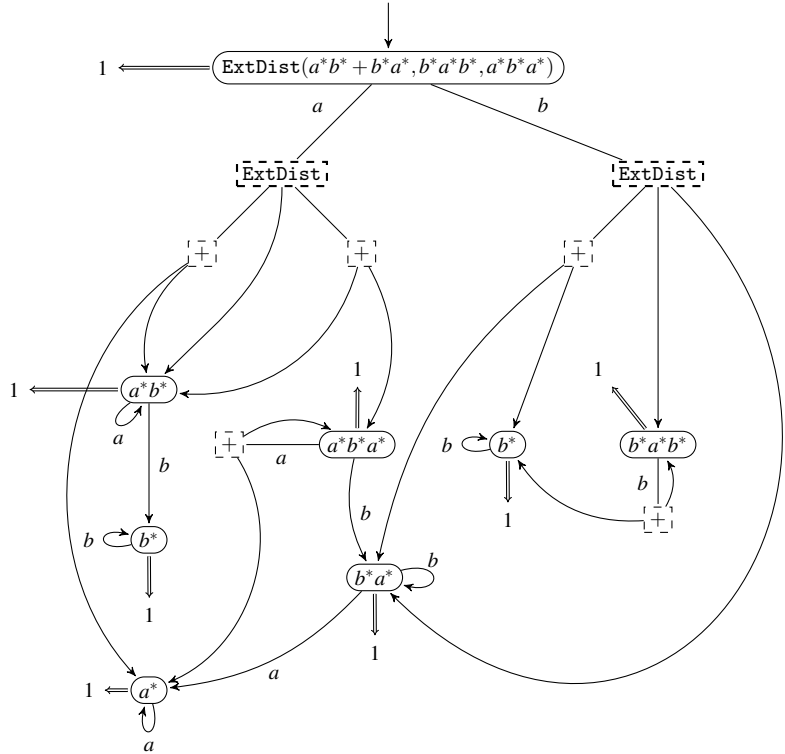
\begin{figure}[H]
    \centering
    \begin{tikzpicture}[node distance=2.5cm,bend angle=30,transform shape,scale=1]
        \node[state, rounded rectangle] (1)  {\(\mathtt{ExtDist}(a^*b^* + b^*a^*, b^*a^*b^*, a^*b^*a^*)\)} ;

        \node[thick, state,rectangle, dashed, below left of=1,node distance = 2cm] (2) {\( \mathtt{ExtDist}  \)};

        \node[state, rectangle, dashed, below left of=2,node distance = 1.5cm] (2+l) {\( +  \)};
        \node[state , rectangle, dashed,below right of=2,node distance = 1.5cm] (2+r) {\( +  \)};

        \node[thick,state, dashed, rectangle, right of=2,node distance=5cm] (3) {\( \mathtt{ExtDist} \)};

        \node[state ,  rectangle, dashed ,below left  of=3,node distance=1.5cm] (3+r) {\( + \)};

        %id states
        \node[below of=2,node distance = 1.1cm, draw = none] (2id) {};
        \node[below right of=3,node distance=2.5cm, draw = none] (3+l) {};
        \node[ below  of=3,node distance=1.1cm, draw = none] (3id) {};

        \node[state, rounded rectangle, below  of=3id,node distance=2.5cm] (4) {\(b^*a^*b^* \)};

        \node[state, rectangle, dashed, below  of=4,node distance=1cm] (4+) {\( + \)};
        \node[state, rounded rectangle, left of=4,node distance=4cm] (5) {\(a^*b^*a^* \)};

        \node[ state, rectangle, dashed,  left  of=5,node distance=1.7cm] (5+) {\( + \)};

        \node[state, rounded rectangle, below left of=2id,node distance=2.5cm] (7) {\(a^*b^* \)};
        \node[state, rounded rectangle,below  of=7,node distance=4cm] (6) {\(a^* \)};
        \node[state, rounded rectangle, left of=4,node distance=2cm] (8) {\(b^* \)};

        \node[state, rounded rectangle,  below  left of=8,node distance=2.5cm] (9) {\( b^*a^* \)};

        \node[state, rounded rectangle,  above of=6,node distance=2cm] (81) {\(b^* \)};
        % \node[state, rounded rectangle, below of=5+,node distance=1.5cm] (10) {\( \emptyset  \)};

        %transitions 0
        \draw (1) ++(0cm,1cm) node {}  edge[->] (1);
        \draw[right] (1) ++(-4cm,0cm) node {$1$}  edge[double, Implies-] (1);
        \draw[right] (6) ++(-1cm,0cm) node {$1$}  edge[double, Implies-] (6);
        \draw[right] (7) ++(-2cm,0cm) node {$1$}  edge[double, Implies-] (7);
        \draw[right] (9) ++(-0.2cm,-1cm) node {$1$}  edge[double, Implies-] (9);
        \draw[right] (5) ++(-0.2cm,1cm) node {$1$}  edge[double, Implies-] (5);
        \draw[right] (4) ++(-1cm,1cm) node {$1$}  edge[double, Implies-] (4);
        \draw[right] (8) ++(-0.2cm,-1cm) node {$1$}  edge[double, Implies-] (8);
        \draw[right] (81) ++(-0.2cm,-1cm) node {$1$}  edge[double, Implies-] (81);
        % \draw[right] (10) ++(-1cm,0.75cm) node {$0$}  edge[<-,below] (10);

        %transitions 1
        \path[->]
        (1) edge[-, above left] node {\( a\)} (2)
        (1) edge[-, above right] node {\( b \)} (3)

        % (2) edge[-, right] node {} (2id)
        % (2id) edge[->] node {} (7)
        (2) edge[->, out = 270, in = 45] node {} (7)

        (2) edge[-, right] node {} (2+r)
        (2) edge[-, right] node {} (2+l)
        (3) edge[-, right] node {} (3+r)

        % (3) edge[-, right] node {} (3id)
        % (3id) edge[->] node {} (4)
        (3) edge[->, ] node {} (4)

        % (3) edge[-, right] node {} (3+l)
        % (3+l) edge[->,  bend right=-70] node {} (9)
        (3) edge[->, out = -45, in = -45, looseness = 2] node {} (9)

        (3+r) edge[->] node {} (8)

        (3+r) edge[->, bend right=20] node {} (9)
        (2+r)  edge[->,bend right=-40] node {} (7)
        (2+r)  edge[->,bend right=-30] node {} (5)
        (2+l)  edge[->,bend right=50] node {} (6)
        (2+l)  edge[->,bend right=30] node {} (7)
        (5)  edge[-] node {\(a\)} (5+)
        (5+)  edge[->,bend right=-30] node {} (5)
        (5+)  edge[->,bend right=-50] node {} (6)
        (4)  edge[-,left] node {$b$} (4+)
        (4+)  edge[->,bend right=30] node {} (4)
        (4+)  edge[->,bend right=-30] node {} (8)
        (5)  edge[->,bend right=20, right] node {$b$} (9)
        (9)  edge[->,bend right=-20, below] node {$a$} (6)
        % (6)  edge[->,bend right=20, left] node {$a$} (10)

        (7)  edge[->,bend right=0, right] node {$b$} (81)
        % (81)  edge[->,bend right=0, below] node {$a$} (10)
        ;
        \path[->,right] (7) edge[out=-145, in=-115, loop,below] node {$a$} ();
        \path[->,right] (6) edge[loop below,->,below] node {$a$} ();
        \path[->,right] (9) edge[loop right,->,above] node {$b$} ();
        % \path[->,right] (10) edge[loop above,->,above] node {$a,b$} ();
        \path[->,right] (8) edge[loop left,->,left] node {$b$} ();
        \path[->,right] (81) edge[loop left,->,left] node {$b$} ();

    \end{tikzpicture}
    \caption{The Associated Derivative Automaton of  \( \mathtt{ExtDist}(a^*b^* + b^*a^*, b^*a^*b^*, a^*b^*a^*)\). }%
    \label{ref: fig auto grad comb monad}
\end{figure}

However, notice that not every monadic expression produces a finite set of derivated terms, as shown in the next example.

\begin{example}
    Let us consider the expression \(E\) of Example~\ref{ex: calc der} and the expression \(F=E\cdot c^*\).
    It can be shown that
    \begin{align*}
        d_{a^n}(F) & = \mathtt{toExp}(d_{a^n}(E))\cdot c^*                                     \\
                   & = \mathtt{ExtDist}(a^*b^*+a^*, a^*b^*, a^*b^*a^*+ n \odot a^*) \cdot c^*.
    \end{align*}
\end{example}

The study of the necessary and sufficient conditions of monads that lead to a finite set of derivated terms is one of the next steps of our work.
\section{Haskell Implementation}\label{sec haskell}
The notions described in this paper have been implemented in Haskell, as follows:
\begin{itemize}
      \item The notion of monad over a sub-category of sets is a typeclass using the \emph{Constraint kind} to specify a sub-category;
      \item \(n\)-ary functions and their operadic structures are implemented using fixed length vectors, the size of which is determined at compilation using type level programming;
      \item The notion of graded module is implemented through an existential type to deal with unknown arities:
            Its monadic structure is based on an extension of heterogeneous lists, the \emph{graded vectors}, typed w.r.t.\ the list of the arities of the elements it contains;
      \item The parser and some type level functions are based on dependently typed programming with singletons~\cite{AW12}, allowing, for example, determining the type of the monads or the arity of the functions involved at run-time;
      \item An application is available \href{http://ludovicmignot.free.fr/programmes/monDer/index.html}{here}~\cite{AppWeb,gitLM} illustrating the computations:
            \begin{itemize}
                  \item the backend uses \href{https://hackage.haskell.org/package/servant}{servant} to define an API hosted \href{https://mon-test-stack.herokuapp.com}{by Heroku};
                  \item the frontend is defined using \href{https://hackage.haskell.org/package/reflex}{Reflex}, a functional reactive programming engine and cross compiled in JavaScript with GHCJS\@.
            \end{itemize}
            As an example, the monadic expression of the previous examples can be entered in the web application as the input \texttt{ExtDist(a*.b*+b*.a*,b*.a*.b*,a*.b*.a*)}.
\end{itemize}

\section{Conclusion and Perspectives}

In this paper, we achieved the first step of our plan to unify the derivative computation
over word expressions.
Monads are indeed useful tools to abstract the underlying computation structures
and thus may allow us to consider some other functionalities, such as
capture groups \emph{via} the well-known StateT monad transformer~\cite{MPJ95}, that we plan to study in a future work.
We also aim to study the conditions satisfying by monads that lead to finite set of derivated terms, and
to extend this method to tree expressions using enriched categories.

\bibliographystyle{eptcs}
\bibliography{biblio}

\end{document}